\newtheorem{remark}{Remark}
\newtheorem{theorem}{Theorem}
\newtheorem{corollary}{Corollary}
\def\bfu{{\mbox{\boldmath $u$}}}
\def\bfx{{\mbox{\boldmath $x$}}}
\newtheorem{assumption}{Assumption}
\title{\LARGE \bf Learning Model Predictive Control for Periodic Repetitive Tasks}
\author{Nicola Scianca$^1$\thanks{$^{1}$Nicola Scianca is with the Dipartimento di Ingegneria Informatica, Automatica e Gestionale, Sapienza Universit\`a di Roma, via Ariosto 25, 00185 Roma, Italy. E-mail: scianca@diag.uniroma1.it}, Ugo Rosolia$^2$, Francesco Borrelli$^2$\thanks{
$^{2}$Ugo Rosolia and Francesco Borrelli are with the Department of Mechanical Engineering, University of California at Berkeley, Berkeley, CA 94701, USA. E-mail: {\{ugo.rosolia, fborrelli\}}@berkeley.edu}}
\begin{document}

\maketitle

\begin{abstract}
We propose a reference-free learning model predictive controller for periodic repetitive tasks. We consider a problem in which dynamics, constraints and stage cost are periodically time-varying. The controller uses the closed-loop data to construct a time-varying terminal set and a time-varying terminal cost. We show that the proposed strategy in closed-loop with linear and nonlinear systems guarantees recursive constraints satisfaction, non-increasing open-loop cost, and that the open-loop and closed-loop cost are the same at convergence. Simulations are presented for different repetitive tasks, both for linear and nonlinear systems.
\end{abstract}

\section{Introduction}

Repetitive tasks can be found in virtually every human activity. Performing the same action several times can be tiring and alienating. For this reason in several applications, repetitive tasks are often delegated to automated systems. 
This motivated researchers to study how to design control algorithms tailored to iterative and repetitive tasks~\cite{bristow2006survey,c7,c8, hillerstrom1996repetitive,lee2001model,gupta2006period}. The key idea is to exploit historical data to improve the closed-loop performance of autonomous systems.
Indeed, even a small improvement, when applied to a large number of repetitions, will yield a considerable gain.


At each execution of an \textit{iterative task} the system starts from the same initial condition. These control problems are studied in  
Iterative Learning Control (ILC), where the controller learns from previous iterations how to improve its closed-loop performance. In classical ILC the controller objective is to track a given reference, rejecting periodic disturbances \cite{bristow2006survey,c7,c8}. The main advantage is that information from previous iterations is incorporated in the problem formulation and it is used to improve the system performance.

A different class of control problems arises when the controller performs a \textit{repetitive task} in which the initial condition of each execution is the final condition of the previous, which is to say, the system operates continuously.
These problems are studied in Repetitive Control (RC) \cite{hillerstrom1996repetitive}. The goal of RC is defined as tracking of a periodic trajectory, or rejection of a periodic disturbance. The idea is to construct a controller which contains a system whose output is the reference to be tracked (or the disturbance to be rejected), which is known as the internal model principle. Since RC aims at continuous operation, improvement is done by using previous data within a single execution of the task. Early RC controllers use frequency-domain design techniques, but there are modern formulations which employ Model Predictive Control (MPC) to track a given reference, such as \cite{lee2001model, gupta2006period, gondhalekar2011mpc}, or \cite{cao2008repetitive} which addresses the problem of period mismatch. Applications can be found for control of a reverse flow reactor \cite{balaji2007repetitive} or wind turbines \cite{friis2011repetitive}.


Classical ILC and RC approaches define performance in terms of tracking of a given reference, which has to be provided to the controller. The reference could be computed offline as a solution of a periodic optimal control problem \cite{gilbert1977optimal, bittanti1986optimal, colonius2006optimal}. Optimal periodic control is formulated as an infinite horizon control problem with the objective of minimizing an average cost. 

A reference trajectory may be hard to compute apriori, for this reasons several recent works proposed reference-free strategies~\cite{ellis2014tutorial,diehl2010lyapunov,zanon2016periodic}. 
Examples of these are provided by MPC schemes which minimize a certain economic index, often referred to as economic MPC~\cite{ellis2014tutorial}. These controllers compute the control action after forecasting the evolution of the system over a time horizon. This strategy often works well in practice but it makes it harder to provide theoretical guarantees. Among them, some specifically adopt a periodic formulation. A Lyapunov function for a class of systems is found for a periodic MPC with economic cost in \cite{diehl2010lyapunov}, while in \cite{zanon2016periodic} the authors analyze economic MPC on systems with dissipativity properties, including a periodic formulation.

In this work we extend the reference-free Learning Model Predictive Control (LMPC) described in~\cite{rosolia2017learning,rosolia2017linear} to repetitive tasks, where the system operates continuously and the initial condition is not the same at each execution. We maintain the core idea which is to use historical data to compute a terminal constraint and estimate the terminal cost of the MPC, however, the way these are computed is substantially different. The original formulation of LMPC employed the ILC paradigm in which the system is restarted to the same initial condition and uses only data from previous iterations. 
In the proposed work we continuously update the controller using data from a single execution. We consider dynamics, constraints and stage cost that are periodically time-varying and we require that an initial feasible periodic trajectory is known. For these systems we show recursive feasibility, non-increasing open-loop cost, and that the open-loop trajectory equals the closed-loop at convergence.

Several periodic MPC approaches make use of a \emph{lifted reformulation}~\cite{bittanti1986optimal} to turn a time-varying periodic system into a time-invariant system. This simplifies the problem if the prediction is an integer multiple than the period. We cannot use this approach in the proposed work since we assume the prediction horizon to be much smaller than the period. 


\section{Problem definition}\label{sec:problemDefinition}

Consider the time-varying periodic system
\begin{equation}
\begin{split}
\label{eq:systemDynamics}
&x_{t+1} = f_t(x_t,u_t) = f_{t+P}(x_t,u_t), \forall t \geq 0\\
\end{split}
\end{equation}
where $P$ is the period of the system\footnote{We assume the period P to be an integer multiple of the sampling time.}, the state $x_t \in \mathbb{R}^n$ and input $u_t \in \mathbb{R}^d$.
Furthermore, the system is subject to the following \mbox{$P$-periodic} state and input convex time-varying constraints,
\begin{equation}\label{eq:constraints}
\begin{split}
&x_t\in {\cal X}_t ={\cal X}_{t+P} \text{ and } u_t\in {\cal U}_t={\cal U}_{t+P}, \forall t \geq 0.
\end{split}
\end{equation}
Finally, we define the $j$th \emph{cycle} as the time interval $[jP,(j+1)P)$, with $j\ge 0$. At any time $t$, the current cycle is given by $M = \mathrm{floor}(t/P)$ and the time from the beginning of the cycle $\tau = t \bmod P$ is the \emph{intracycle time}.


The goal of the control problem is to find a \mbox{$P$-periodic} trajectory which solves the following average cost optimal control problem (see, e.g., \cite{diehl2010lyapunov})
\begin{equation}\begin{split}\label{eq:optimalControlProblem}
\min_{u_{0},u_1\ldots}
\quad & \lim_{T\to\infty}\frac{1}{T}\sum_{k=0}^{T-1} h_k(x_k,u_k) \\
\textrm{s.t.} \quad  &x_{t+1} = f_t(x_t,u_t),~  \forall t \geq 0\\
&x_t\in {\cal X}_t, u_t\in {\cal U}_t, ~ \forall t\geq 0\\
&x_t = x_{t+P}, ~ \forall t\geq 0.
\end{split}\end{equation}
where $h_t(\cdot,\cdot)$ is a convex function of the state and input, and it is time-varying with period $P$.
\begin{assumption}\label{ass:stageCost}
The constraints ${\cal X}_t$ and ${\cal U}_t$ are convex and $P$-periodic, i.e.,
${\cal X}_t ={\cal X}_{t+P} \text{ and } {\cal U}_t={\cal U}_{t+P},$
and the stage cost $h(\cdot, \cdot)$ is jointly convex in its arguments and also $P$-periodic
$h_t(\cdot,\cdot) = h_{t+P}(\cdot,\cdot) ~ \forall t\geq0$.
\end{assumption}
\smallskip

The control problem~\eqref{eq:optimalControlProblem} aims to find a periodic trajectory which minimizes an average cost. Note that this is in fact an infinite horizon control problem, but given the periodicity in the solution it is equivalent to minimizing the average cost over a single period~\cite{colonius2006optimal}.


\section{Controller Design}\label{sec:controllerdesign}
In this section we first show how to construct a terminal set and terminal cost function exploiting historical data. At time $t$, this data consists of the stored closed-loop trajectories and input sequence,
\begin{equation}\label{eq:storedTrajectories}
\begin{aligned}
 {\bf{u}}_t = [u_0,\ldots,~u_{t-1}] \text{ and } {\bf{x}}_t = [x_0,\ldots,~x_{t}].
\end{aligned}
\end{equation}
Afterwards, we use the terminal constraint set and cost function in the LMPC design.

\subsection{Safe Set}\label{sec:safeSet}
We show that the closed-loop data can be used to construct a time-varying terminal constraint set that allows us to guarantee recursive constraint satisfaction for the proposed strategy.
At each time $t$ we define the sampled safe set as
\begin{equation*}
{\cal SS}_t = \Bigg\{ \bigcup_{j=1}^M x_{t-jP} \Bigg\},
\end{equation*}
where $P$ is the period of system \eqref{eq:systemDynamics}. The above set contains all the states of the realized trajectory with the same intracycle time $\tau$ (defined in Section~\ref{sec:problemDefinition}). The rationale behind this choice is that the optimal problem (\ref{eq:optimalControlProblem}) is invariant for time shifts of any integer multiple of $P$.

We can now define the time-varying convex safe set ${\cal CS}_t$ as the convex hull of ${\cal SS}_t$, i.e.,
\begin{equation}\label{eq:convexSafeSet}
\begin{aligned}
{\cal CS}_t &= \text{Conv}\big( \mathcal{SS}_t \big) \\
&= \Bigg\{ x \in \mathbb{R}^n : \exists \lambda_j\ge 0, \sum_{i=1}^{M}\lambda_j = 1, x =\sum_{i=1}^M\lambda_i x_{t-jP}\Bigg\}.
\end{aligned}
\end{equation}

Figure~\ref{fig:safeSets} shows an example trajectory with $n=2$ in phase space. The associated safe sets are highlighted as light grey polygons. The period is $P=6$ which means there are a total of $6$ convex safe sets at any given time, one for every time within a cycle.

\begin{figure}[h!]
    \centering
    \includegraphics[scale=1.5]{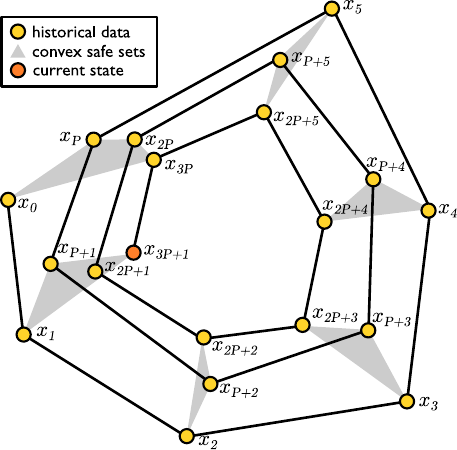}
    \caption{A visual representation of the convex safe sets for a period $P=6$.}
    \label{fig:safeSets}
\end{figure}

The convex safe set will be used as a terminal constraint for the LMPC scheme. Throughout the paper we make the following assumption, which guarantees that the safe set is non-empty. 
\begin{assumption}\label{ass:nonEmptySS}
We are given a periodic trajectory and associated input sequence $[x_0,\ldots, x_P] \text{ and }[u_0,\ldots, u_{P-1}]$
such that state and input constraint are satisfied $x_t \in \mathcal{X}_t$, $u_t \in \mathcal{U}_t ~ \forall k \in \{0, \ldots, P \}$ and $x_P = x_0$.
\end{assumption}
\medskip

\begin{figure*}
    \centering
    \includegraphics[scale=1.5]{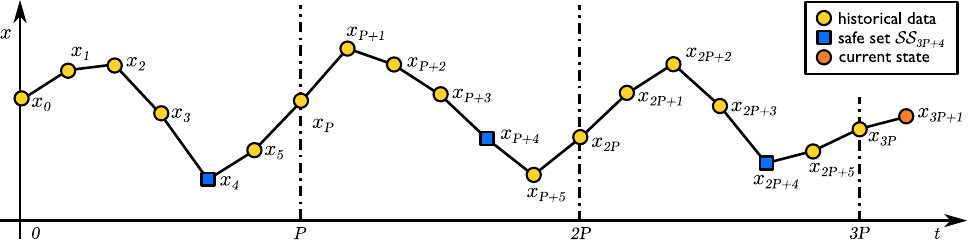}
    \caption{A visual representation of how a safe set is constructed. In the example the period is $P=6$ and the prediction is $N=3$. Yellow points show the historical data. Among them, some are selected (shown in blue) as part of the safe set ${\cal SS}_{3P+4}$, which is used to construct the terminal set of the MPC.}
    \label{fig:stateTime}
\end{figure*}

\begin{remark}
The above assumption is not restrictive in most practical applications. Indeed, a periodic suboptimal trajectory can be computed using any other controller or it can simply be a known feasible trajectory. For example, any steady-state solution to the dynamics can be used as long as it satisfies the constraints.
\end{remark}

\subsection{\mbox{$Q$-function}}
In LMPC for iterative tasks for each stored state we compute a cost-to-go, which is defined as the cost to complete the control task~\cite{rosolia2017learning}. However, our system operates continuously and our goal is achieved when the system reaches an optimal periodic trajectory, which means that the cost-to-go is not well defined. For this reason, in place of the cost-to-go, we associate to every state a return-cost, which is defined as the cost to return to the current state $x_t$. 
This return-cost is then used to define a \mbox{$Q$-function} which will be employed in the controller design.



We start by associating a return-cost to each state in the safe set $x_i\in {\cal SS}_t$, where $i$ denotes the time when this state was realized in the trajectory. At time $t$, the return-cost is defined as the cost to reach the current state $x_t$ starting from $x_i$ and following the realized trajectory, i.e.,
\begin{equation}
\label{eq:ctg}
J_t(x_i) = \sum_{k=i}^{t-1}h_k(x_k,u_k).
\end{equation}
It is easy to verify that the following property holds
\begin{equation}\label{eq:ctgProp}
J_t(x_i) - h_i(x_i,u_i) + h_t(x_t,u_t) = J_{t+1}(x_{i+1}),
\end{equation}
which gives the return-cost of the states in the safe set ${\cal SS}_{t+1}$.

\begin{remark}
The return-cost is time-varying and therefore has to be computed at each time $t$. However, the property~\eqref{eq:ctgProp} allows for efficiently computing $J_t(\cdot)$ by updating $J_{t-1}(\cdot)$.
\end{remark}

The \mbox{$Q$-function} is defined for every state $x\in {\cal CS}_k$ as
\begin{equation}
\label{eq:Qfunction}
\begin{aligned}
Q_{k\to t}\left(x\right) = \min_{\lambda_j \geq 0} \quad &\sum_{j=1}^M\lambda_j J_t\left( x_{k-jP}\right)\\
\text{s.t.} \quad & \sum_{j=1}^M\lambda_j x_{k-jP} =x,\quad\sum_{j=1}^M\lambda_j = 1.
\end{aligned}
\end{equation}
For an intuitive interpretation, consider that by construction every state $x\in{\cal CS}_t$ can be written as a convex combination of states $x_j\in{\cal SS}_t$. We define $Q_t(x)$ as the same convex combination of the return-cost of the states in ${\cal SS}_t$.

\subsection{LMPC Formulation}

In this section we describe the proposed LMPC strategy for periodic repetitive tasks.
The core idea is to use the convex safe set ${\cal CS}_t$ and the \mbox{$Q$-function} as a terminal constraint and cost for the MPC scheme. 
The difference between the proposed LMPC and the original scheme~\cite{rosolia2017learning} is that: $(i)$ the convex safe set is defined based on the periodicity of the control problem, $(ii)$ the \mbox{$Q$-function} is defined as the cost to return to the current state, $(iii)$ dynamics, constraints and stage cost are allowed to be periodically time-varying.

A step of the LMPC algorithm goes as follows: at a generic time $t$ we compute the solution to the following finite time optimal control problem with horizon $N$,
\begin{equation}\label{eq:FTOCP}
\begin{split}
J_{t\to t+N}^{\rm LMPC}(x_t) =& \\
\min\limits_{\substack{u_{t|t},\dots\\u_{t+N-1|t}}} \quad &\!\!\!\!\sum\limits_{k=t}^{t+N-1}\!\!\! h_k(x_{k|t},u_{k|t}) + {Q}_{t+N\to t}(x_{t+N|t}) \\
\textrm{s.t.} \quad
&x_(k+1|t) = f(x_{k|t}, u_{k|t}) \quad \forall k\in[0,N-1]\\
&x_{t|t} = x_t\\
&x_{t+N|t} \in {\cal CS}_{t+N} \\
&x_{k|t}\in {\cal X}_k, u_{k|t}\in {\cal U}_k \quad \forall k\in[0,N-1].
\end{split}
\end{equation}
The solution to the above finite time optimal control problems steers the system from the current measured state $x_t$ to the convex safe set ${\cal CS}_{t+N}$, while satisfying state and input constraint.

Let
\begin{equation*}\begin{split}
\bfu_{t:t+N}^\ast &= [ u_{t|t}^\ast,\dots,u_{t+N-1|t}^\ast ] \\
\bfx_{t:t+N}^\ast &= [ x_{t|t}^\ast,\dots,x_{t+N|t}^\ast ]
\end{split}\end{equation*}
be the optimal solution to~\eqref{eq:FTOCP}, then we apply to the system~\eqref{eq:systemDynamics} the first element of the optimizer vector
\begin{equation}\label{eq:LMPCpolicy}
    u_t = u_{t|t}^\ast.
\end{equation}
The process is repeated at time $t+1$ starting from $x_{t+1}$, which is the standard MPC procedure.

Figure~\ref{fig:stateTime} shows, for a given realized trajectory, how the safe set is constructed. In the example the state dimension is $n=1$, the period is $P=6$ and the prediction horizon is $N=3$. The current time is $t=3P+1$, so the safe set $\mathcal{SS}_{3P+4} = \{x_{1+N},x_{P+1+N},x_{2P+1+N}\}$.

Figure~\ref{fig:stateSpace} gives another visual representation of a generic step of the algorithm. The parameters are the same except for the state dimension $n=2$, and the trajectory is shown in phase space. The realized trajectory starts at $x_0$ and roughly delineates a spiral shape. As in the previous example, the current time is $t=3P+1$ and we select the same states for the safe set. Here the convex safe set ${\cal CS}_{3P+4}$ is also shown as a grey polygon with blue vertices.

\begin{figure}[h!]
    \centering
    \includegraphics[scale=1.5]{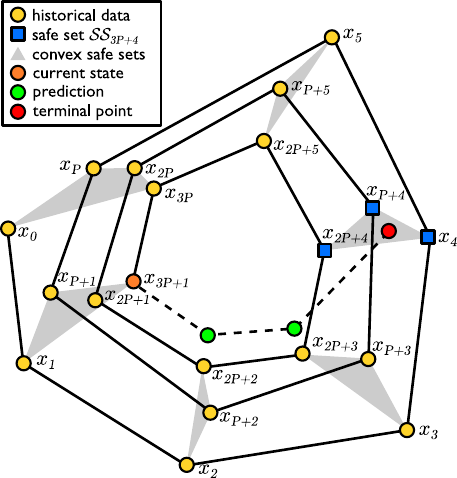}
    \caption{A visual representation of a step of the MPC algorithm. The grey regions show convex safe sets. The safe set currently used to construct the terminal constraint is marked by blue vertices.}
    \label{fig:stateSpace}
\end{figure}

It is important to underline that only the terminal state is required to be in the convex safe set, not the entire prediction, as it can be observed by the fact that the green states are outside of any of the safe sets. Therefore, the states added each time to the realized trajectory do not always lie in the existing safe set, but rather enlarge the safe set at the next cycle. This is a crucial point because it is what drives the learning process.

\section{Properties}\label{sec:properties}
In this section, we exploit the properties of the convex safe set and \mbox{$Q$-function} from Section~\ref{sec:controllerdesign} to show that the proposed control strategy guarantees recursive constraint satisfaction and non-increasing open-loop cost. For linear systems, these results are direct consequences of the return-cost property~(\ref{eq:ctgProp}). In order to extend these results to a class of nonlinear systems, we adopt two additional assumptions:

\begin{assumption}\label{eq:dynamicsAssumption}
Consider a set of state-input pairs $(x_j, u_j)$
\begin{equation*}
\begin{split}
&\{(x_j,u_j)\quad j=0,\dots,L\} \\
&x_j\in {\cal X}_t, u_j\in {\cal U}_t \quad \forall j=0,\dots,L\\
\end{split}
\end{equation*}
such that $f_t(x_j,u_j) \in {\cal X}_{t+1}$ $\forall j=0,\dots,L$. We assume that for any set of multipliers $\lambda_j\ge 0$ such that $\sum_j \lambda_j = 1$, there exists a set of multipliers $\gamma_j>0$ such that $\sum_j \gamma_j = 1$ and 
\begin{equation*}
f_t\left( \sum_{j=0}^L \lambda_j x_j, \sum_{j=0}^L \gamma_j u_j\right) = \sum_{j=0}^L \lambda_j f(x_j, u_j), ~  \forall t\ge 0.
\end{equation*}
\end{assumption}
\medskip
Note that this assumption is verified for linear systems using $\gamma_j = \lambda_j$.

\begin{assumption}\label{eq:costAssumption}
If the system dynamics~(\ref{eq:systemDynamics}) is nonlinear, the stage cost does not depend on the input.
\end{assumption}

\subsection{Recursive feasibility}

\begin{theorem}[Recursive Feasibility]
Consider system~\eqref{eq:systemDynamics} in closed-loop with the LMPC~\eqref{eq:FTOCP} and \eqref{eq:LMPCpolicy}, where the convex safe set and \mbox{$Q$-function} are defined in \eqref{eq:convexSafeSet} and \eqref{eq:Qfunction}, respectively. Let Assumptions~\ref{ass:stageCost}-\ref{eq:dynamicsAssumption} hold. Then at all time $t\geq P$, Problem~\eqref{eq:FTOCP} is feasible, i.e., the closed-loop system~\eqref{eq:systemDynamics} and \eqref{eq:LMPCpolicy} satisfies input and state constraint~\eqref{eq:constraints}.
\end{theorem}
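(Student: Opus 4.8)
The plan is to argue by induction on $t$, following the standard recursive-feasibility template for MPC, but with the terminal set and its one-step propagation adapted to the periodic structure. First I would establish the base case at $t=P$. Here the stored trajectory is exactly the initial periodic trajectory from Assumption~\ref{ass:nonEmptySS} and the current state is $x_P=x_0$. Since $N<P$, the set $\mathcal{SS}_{P+N}$ reduces to the single sample $x_N$, so $\mathcal{CS}_{P+N}=\{x_N\}$. A feasible candidate is then obtained by re-applying $u_0,\dots,u_{N-1}$: by $P$-periodicity of the dynamics $f_{P+k}=f_k$, this reproduces the states $x_0,\dots,x_N$ (shifted in time index), terminating exactly at $x_N\in\mathcal{CS}_{P+N}$, while $P$-periodicity of the constraint sets gives $x_{P+k|P}=x_k\in\mathcal{X}_{P+k}$ and $u_{P+k|P}=u_k\in\mathcal{U}_{P+k}$.

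For the inductive step I would assume that \eqref{eq:FTOCP} is feasible at time $t$ with optimal solution $\bfx^\ast_{t:t+N},\bfu^\ast_{t:t+N}$, so that the closed-loop state is $x_{t+1}=x^\ast_{t+1|t}$, and build an explicit feasible candidate at $t+1$. Its shifted tail $x^\ast_{t+1|t},\dots,x^\ast_{t+N|t}$ together with $u^\ast_{t+1|t},\dots,u^\ast_{t+N-1|t}$ satisfies the dynamics and the time-varying constraints automatically, being inherited from an optimal solution at $t$. The only nontrivial part is supplying the final input and terminal state so that the latter lands in $\mathcal{CS}_{(t+1)+N}$.

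This is where the periodic structure and Assumption~\ref{eq:dynamicsAssumption} enter, and it is the step I expect to be the crux. Writing the optimal terminal state as a convex combination $x^\ast_{t+N|t}=\sum_{j}\lambda_j x_{t+N-jP}$ of safe-set samples, I would note that all these samples share the intracycle time $(t+N)\bmod P$, and that applying the realized input $u_{t+N-jP}$ to each drives it along the stored trajectory to $x_{t+N+1-jP}$; by $P$-periodicity $f_{t+N-jP}=f_{t+N}$, and these successor states are precisely the samples $x_{(t+1)+N-jP}$ defining $\mathcal{CS}_{(t+1)+N}$. Assumption~\ref{eq:dynamicsAssumption}, applied at time $t+N$ (its precondition holds since each $f_{t+N}(x_{t+N-jP},u_{t+N-jP})=x_{t+N+1-jP}\in\mathcal{X}_{t+N+1}$), then furnishes multipliers $\gamma_j$ such that the single input $\bar u=\sum_j\gamma_j u_{t+N-jP}$ steers $x^\ast_{t+N|t}$ exactly to $\bar x=\sum_j\lambda_j x_{t+N+1-jP}$, a convex combination of samples of $\mathcal{SS}_{(t+1)+N}$ and hence a point of $\mathcal{CS}_{(t+1)+N}$.

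Finally I would check $\bar u$ and $\bar x$ against the constraints using convexity (Assumption~\ref{ass:stageCost}): each $u_{t+N-jP}\in\mathcal{U}_{t+N}$ and each $x_{t+N+1-jP}\in\mathcal{X}_{t+N+1}$ by periodicity, so their convex combinations remain in the respective convex sets. This completes a feasible candidate for \eqref{eq:FTOCP} at $t+1$, closing the induction, and feasibility of the optimizer guarantees that the closed-loop trajectory satisfies \eqref{eq:constraints}. The main obstacle is exactly the propagation argument: without Assumption~\ref{eq:dynamicsAssumption} a convex combination of next-states need not be reachable by any admissible input from the combined current state, so the nonlinear case genuinely requires this hypothesis, whereas for linear systems the choice $\gamma_j=\lambda_j$ works, as already observed.
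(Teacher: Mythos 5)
Your proposal is correct and follows essentially the same route as the paper's proof: feasibility at $t=P$ from the initial periodic trajectory of Assumption~\ref{ass:nonEmptySS}, then an induction step that shifts the optimal solution and uses Assumption~\ref{eq:dynamicsAssumption} to construct the input $\sum_j\gamma_j u_{t+N-jP}$ steering the terminal state $\sum_j\lambda_j^\ast x_{t+N-jP}$ to $\sum_j\lambda_j^\ast x_{t+1+N-jP}\in{\cal CS}_{t+1+N}$. If anything, you spell out more detail than the paper (the singleton terminal set at the base case and the explicit convexity check that $\bar u\in{\cal U}_{t+N}$ and $\bar x\in{\cal X}_{t+N+1}$), which the paper asserts more tersely.
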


\begin{proof}
By Assumption~\ref{ass:nonEmptySS}, a feasible trajectory for $t \in \{0, \ldots, P\}$ is given, thus the system is in closed-loop with LMPC starting from time $t=P$. First we notice that at time $t = P$, the following state and input sequence
\begin{equation*}
    [x_0,\ldots,x_P] \text{ and } [u_0,\ldots,u_{P-1}]
\end{equation*}
from Assumption~\eqref{ass:nonEmptySS} is feasible for Problem~\eqref{eq:FTOCP}, as this trajectory is periodic. Therefore, at time $t=P$ the LMPC is feasible.

We will now prove that if LMPC is feasible at time $t$, it is also feasible at time $t+1$, for any $t$. Consider the optimal predicted trajectory at time $t$, where the terminal point is in ${\cal CS}_{t+N}$,
\begin{equation*}
\begin{split}
\left[x_{t|t}^\ast,\dots,x_{t+N|t}^\ast=\sum_{j=1}^M \lambda_j^\ast x_{t+N-jP}\right].
\end{split}
\end{equation*}


Assumption~\ref{eq:dynamicsAssumption} guarantees that there exists an input $\sum_{j=1}^M \gamma_j u_j$ such that
\begin{equation*}\begin{split}
&f_t\left( \sum_{j=1}^M \lambda_j^\ast x_{t+N-jP}, \sum_{j=1}^M \gamma_j u_{t+N-jP}\right) = \\
&\sum_{j=1}^M \lambda_j^\ast f_t(x_{t+N-jP}, u_{t+N-jP}) = \sum_{j=1}^M \lambda_j^\ast x_{t+1+N-jP}.
\end{split}\end{equation*}
This is a feasible input because $\sum_{j=1}^M\gamma_j u_{t+N-jP} \in {\cal U}_{t+N}$ and the new state is a feasible terminal state at time $t+1$ as $\sum_{j=1}^M \lambda_j^\ast x_{t+1+N-jP}\in {\cal CS}_{t+1}$. This implies that the following
\begin{equation}\label{eq:feasTraj}
\left[x_{t+1|t}^\ast,\dots,x_{t+N|t}^\ast,\sum_{j=1}^M \lambda_j x_{t+1+N-jP}\right]
\end{equation}
is a feasible trajectory for LMPC at time $t+1$.

We have shown the the LMPC is feasible at time $t=P$. Furthermore, we have that if the LMPC is feasible at time $t$, then the LMPC is feasible at time $t+1$. Therefore, we conclude by induction that the LMPC~\eqref{eq:FTOCP} and \eqref{eq:LMPCpolicy} is feasible for all $t \geq P$ and that the closed-loop system~\eqref{eq:systemDynamics} and \eqref{eq:LMPCpolicy} satisfies input and state constraint~\eqref{eq:constraints}.
\end{proof}

\subsection{Non-increasing cost}

The following theorem proves non-increasing cost of LMPC for linear system. The extension to the nonlinear case is given in a subsequent corollary.

\begin{theorem}[Non-increasing cost, LTV systems]\label{theo:nonincreasingCostNL}
Consider system~\eqref{eq:systemDynamics} in closed-loop with the LMPC~\eqref{eq:FTOCP} and \eqref{eq:LMPCpolicy}, where the convex safe set and \mbox{$Q$-function} are defined in \eqref{eq:convexSafeSet} and \eqref{eq:Qfunction}, respectively. Let system~\eqref{eq:systemDynamics} be linear time-varying and Assumptions~\ref{ass:stageCost}-\ref{eq:dynamicsAssumption} hold. Then for all time $t\geq P$, we have that
\begin{equation*}
J_{t\to t+N}^{\rm LMPC}(x_t) \geq J_{t+1\to t+1+N}^{\rm LMPC}(x_{t+1}).
\end{equation*}
\end{theorem}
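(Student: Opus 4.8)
The plan is to run the standard model-predictive non-increasing-cost argument: construct an explicit feasible (generally suboptimal) candidate for the time-$(t+1)$ problem from the optimal time-$t$ solution, bound its cost by $J_{t\to t+N}^{\rm LMPC}(x_t)$, and then invoke optimality at $t+1$ to close the chain. Concretely, let $x_{t+N|t}^\ast=\sum_{j=1}^M\lambda_j^\ast x_{t+N-jP}$ be the optimal terminal state, where $\lambda^\ast$ is taken to be the minimizer realizing $Q_{t+N\to t}(x_{t+N|t}^\ast)$ in \eqref{eq:Qfunction}. Reusing the shifted trajectory \eqref{eq:feasTraj} from the recursive-feasibility proof, I take as candidate at $t+1$ the states $x_{t+1|t}^\ast,\dots,x_{t+N|t}^\ast$ followed by the new terminal state $\sum_{j=1}^M\lambda_j^\ast x_{t+1+N-jP}$, and the inputs $u_{t+1|t}^\ast,\dots,u_{t+N-1|t}^\ast$ followed by the terminal input $\sum_{j=1}^M\gamma_j u_{t+N-jP}$ provided by Assumption~\ref{eq:dynamicsAssumption}. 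The previous theorem guarantees this is feasible at $t+1$, so its cost $\tilde J_{t+1}$ satisfies $J_{t+1\to t+1+N}^{\rm LMPC}(x_{t+1})\le\tilde J_{t+1}$.

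Next I would evaluate the difference $J_{t\to t+N}^{\rm LMPC}(x_t)-\tilde J_{t+1}$. The shared stage costs over $k=t+1,\dots,t+N-1$ cancel, leaving the first stage cost $h_t(x_{t|t}^\ast,u_{t|t}^\ast)$, the two terminal $Q$-terms, and the extra terminal stage cost $h_{t+N}(x_{t+N|t}^\ast,\sum_j\gamma_j u_{t+N-jP})$ carried by the candidate. The decisive tool is the return-cost identity \eqref{eq:ctgProp}: applied with source index $i=t+N-jP$ it yields $J_{t+1}(x_{t+1+N-jP})=J_t(x_{t+N-jP})-h_{t+N-jP}(x_{t+N-jP},u_{t+N-jP})+h_t(x_t,u_t)$, and $P$-periodicity of the stage cost gives $h_{t+N-jP}=h_{t+N}$. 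Taking the $\lambda^\ast$-weighted sum (with $\sum_j\lambda_j^\ast=1$) and using that $\lambda^\ast$ is feasible, hence upper-bounding, in the $Q$-minimization at $t+1$, I obtain
\[
Q_{t+1+N\to t+1}\Big(\textstyle\sum_j\lambda_j^\ast x_{t+1+N-jP}\Big)\le Q_{t+N\to t}(x_{t+N|t}^\ast)-\sum_j\lambda_j^\ast h_{t+N}(x_{t+N-jP},u_{t+N-jP})+h_t(x_t,u_t).
\]

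Substituting this bound, the two $Q_{t+N\to t}$ terms cancel, and since $x_{t|t}^\ast=x_t$ and $u_{t|t}^\ast=u_t$ the terms $h_t(x_{t|t}^\ast,u_{t|t}^\ast)$ and $h_t(x_t,u_t)$ cancel as well. What remains is
\[
J_{t\to t+N}^{\rm LMPC}(x_t)-\tilde J_{t+1}\ge\sum_j\lambda_j^\ast h_{t+N}(x_{t+N-jP},u_{t+N-jP})-h_{t+N}\Big(\textstyle\sum_j\lambda_j^\ast x_{t+N-jP},\sum_j\gamma_j u_{t+N-jP}\Big),
\]
so the whole claim reduces to showing this right-hand side is nonnegative. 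Here linearity enters: Assumption~\ref{eq:dynamicsAssumption} lets me take $\gamma_j=\lambda_j^\ast$, so the negative term becomes $h_{t+N}(\sum_j\lambda_j^\ast x_{t+N-jP},\sum_j\lambda_j^\ast u_{t+N-jP})$, and joint convexity of $h_{t+N}$ (Assumption~\ref{ass:stageCost}) is exactly Jensen's inequality for this convex combination. Hence $J_{t\to t+N}^{\rm LMPC}(x_t)\ge\tilde J_{t+1}\ge J_{t+1\to t+1+N}^{\rm LMPC}(x_{t+1})$.

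The main obstacle, and the reason the linear case is isolated from the nonlinear corollary, is precisely this final convexity step: for a general nonlinear system Assumption~\ref{eq:dynamicsAssumption} only yields $\gamma_j\ne\lambda_j^\ast$, so the input argument of the negative $h_{t+N}$ term no longer matches the convex weights and Jensen's inequality cannot be invoked. Linearity removes this mismatch via $\gamma_j=\lambda_j^\ast$, which is what makes the argument go through here; the nonlinear case instead sidesteps it through Assumption~\ref{eq:costAssumption} (input-independent cost), reducing to convexity in the state alone. A secondary bookkeeping point to handle carefully is the dual role of $\lambda^\ast$, as the representation of the optimal terminal state and simultaneously as a feasible, not necessarily optimal, multiplier in the $t+1$ $Q$-function, together with the periodic index shift $h_{t+N-jP}=h_{t+N}$.
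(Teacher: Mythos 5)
Your proposal is correct and follows essentially the same route as the paper's proof: the shifted candidate trajectory \eqref{eq:feasTraj}, the return-cost telescoping identity \eqref{eq:ctgProp} with the periodic shift $h_{t+N-jP}=h_{t+N}$, the feasible-multiplier upper bound on the $Q$-function at $t+1$, and Jensen's inequality on the jointly convex stage cost with $\gamma_j=\lambda_j^\ast$ in the linear case. The only difference is organizational (you compute the difference $J_{t\to t+N}^{\rm LMPC}-\tilde J_{t+1}$ while the paper chains the inequalities directly), and your diagnosis of why linearity is needed matches the paper's treatment of the nonlinear corollary.
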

\medskip
\begin{proof}
We will now prove that the cost of LMPC is non-increasing at each time instant. To do this we will employ property (\ref{eq:ctgProp}) of the return-cost.
Consider the cost of LMPC at time $t$
\begin{equation*}
\begin{split}
&J_{t\to t+N}^{LMPC}(x_t) = \sum\limits_{k=t}^{t+N-1}\!\!\! h_k(x_{k|t}^\ast,u_{k|t}^\ast) + Q_{t+N\to t}(x_{t+N|t}^\ast) \\
&= \sum_{k=t}^{t+N-1}\!\!\! h_k(x_{k|t}^\ast,u_{k|t}^\ast) + \sum_{j=1}^M\lambda_j^\ast J_t( x_{t+N-jP}) \\
& =\sum_{k=t}^{t+N-1}\!\!\! h_k(x_{k|t}^\ast,u_{k|t}^\ast) +\sum_{j=1}^M\lambda_j^\ast J_{t+1}( x_{t+1+N-jP}) \\
& +\sum_{j=1}^M\lambda_j^\ast h_{t+N-jP}( x_{t+N-jP}, u_{t+N-jP}) - \sum_{j=1}^M\lambda_j^\ast h_t(x_t,u_t)
\end{split}
\end{equation*}
where we used the property of the return-cost~(\ref{eq:ctgProp}).
In the above equation, the stage costs $h_{t+N-jP}(\cdot,\cdot)$ have the same intracycle time, thus from Assumption \ref{ass:stageCost} they can be replaced with $h_{t+N}(\cdot,\cdot)$. Furthermore, we can isolate the initial term in the first sum
\begin{equation*}
\begin{split}
&J_{t\to t+N}^{LMPC}(x_t) = \\
&h_t(x_t,u_t) + \!\!\!\sum_{k=t+1}^{t+N-1} \!\!\! h_k(x_{k|t}^\ast,u_{k|t}^\ast) +
\sum_{j=1}^M\lambda_j^\ast J_{t+1}( x_{t+N-jP+1})\\
&+\sum_{j=1}^M\lambda_j^\ast h_{t+N}( x_{t+N-jP}, u_{t+N-jP}) - \sum_{j=1}^M\lambda_j^\ast h_t(x_t,u_t).
\end{split}
\end{equation*}
Since the multipliers $\lambda_j$ add up to $1$, the last term in the above equation is $-h_t(x_t,u_t)$ and cancels the first term
\begin{equation*}\label{eq:lastEquality}
\begin{split}
&J_{t\to t+N}^{LMPC}(x_t) = \\
& \quad \sum_{k=t+1}^{t+N-1}\!\!\! h_k(x_{k|t}^\ast,u_{k|t}^\ast) + \sum_{j=1}^M\lambda_j^\ast h_{t+N}( x_{t+N-jP}, u_{t+N-jP})\\
& \quad \quad \quad \quad \quad \quad\quad \quad\quad \quad \quad \quad \quad +\sum_{j=1}^M\lambda_j^\ast J_{t+1}( x_{t+N-jP+1}).
\end{split}
\end{equation*}
By recalling that the stage cost is convex we have

\begin{equation*}
\begin{split}
J_{t\to t+N}^{LMPC}(x_t) &\geq
\sum_{k=t+1}^{t+N-1} \!\!\!h_k(x_{k|t}^\ast,u_{k|t}^\ast)\\
&+h_{t+N}\left(\sum_{j=1}^M \lambda_j^\ast  x_{t+N-jP},\sum_{j=1}^M \lambda_j^\ast  u_{t+N-jP}\right) \\
&+\sum_{j=1}^M\lambda_j^\ast J_{t+1}( x_{t+N-jP+1}).
\end{split}
\end{equation*}
Notice from \eqref{eq:Qfunction} that the last sum in the above inequality is an upper bound of the \mbox{$Q$-function} at $t+1$ for the terminal point of trajectory \eqref{eq:feasTraj}, therefore
\begin{equation}\label{eq:feasibleTrajectoryCost}
\begin{split}
J_{t\to t+N}^{LMPC}(x_t) &\geq
\sum_{k=t+1}^{t+N-1} \!\!\!h_k(x_{k|t}^\ast,u_{k|t}^\ast)\\
&+h_{t+N}\left(\sum_{j=1}^M \lambda_j^\ast  x_{t+N-jP},\sum_{j=1}^M \lambda_j^\ast  u_{t+N-jP}\right) \\
&+Q_{t+1\to t+1+N}\left(\sum_{j=1}^M\lambda_j^\ast x_{t+N-jP+1}\right).
\end{split}
\end{equation}
By assumption the system is linear, thus the right-hand side of this inequality is the cost of the feasible trajectory~(\ref{eq:feasTraj}) (recall that for a linear system (\ref{eq:feasTraj}) is verified with $\gamma_j=\lambda_j$). This implies
\begin{equation*}
\begin{split}
&J_{t\to t+N}^{LMPC}(x_t) \geq J_{t+1\to t+N+1}^{LMPC}(x_{t+1})
\end{split}
\end{equation*}
which proves that the cost is non-increasing.
\end{proof}
\medskip
\begin{corollary}[Non-increasing cost, nonlinear systems]
Consider the nonlinear system~\eqref{eq:systemDynamics} in closed-loop with the LMPC~\eqref{eq:FTOCP} and \eqref{eq:LMPCpolicy}, where the convex safe set and \mbox{$Q$-function} are defined in \eqref{eq:convexSafeSet} and \eqref{eq:Qfunction}, respectively. Let Assumptions~\ref{ass:stageCost}-\ref{eq:costAssumption} hold. Then for all time $t\geq P$, we have that
\begin{equation*}
J_{t\to t+N}^{\rm LMPC}(x_t) \geq J_{t+1\to t+1+N}^{\rm LMPC}(x_{t+1}).
\end{equation*}
\end{corollary}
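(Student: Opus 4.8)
The plan is to reuse the proof of Theorem~\ref{theo:nonincreasingCostNL} essentially verbatim, since every manipulation there except the very last identification is agnostic to whether the dynamics are linear. I would start from the LMPC cost $J_{t\to t+N}^{\rm LMPC}(x_t)$, expand the terminal $Q$-function through~\eqref{eq:Qfunction} as $\sum_{j=1}^M \lambda_j^\ast J_t(x_{t+N-jP})$, and apply the return-cost property~\eqref{eq:ctgProp} to each summand to shift the return-costs to time $t+1$. As in the linear case, the stage costs $h_{t+N-jP}(\cdot,\cdot)$ share the same intracycle time and are replaced by $h_{t+N}(\cdot,\cdot)$ via Assumption~\ref{ass:stageCost}, the isolated $h_t(x_t,u_t)$ term cancels against $-\sum_j \lambda_j^\ast h_t(x_t,u_t)$ (the multipliers summing to one), and convexity of the stage cost yields the lower bound leading to inequality~\eqref{eq:feasibleTrajectoryCost}. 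None of these steps uses linearity, so they transfer without change.

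The one step that does not transfer is the closing argument. In the linear proof the right-hand side of~\eqref{eq:feasibleTrajectoryCost} is declared equal to the cost of the feasible trajectory~\eqref{eq:feasTraj} by observing that for linear systems one may take $\gamma_j=\lambda_j^\ast$, so the terminal stage cost $h_{t+N}(\sum_j\lambda_j^\ast x_{t+N-jP},\sum_j\lambda_j^\ast u_{t+N-jP})$ produced by the convexity bound coincides with the terminal cost actually incurred along~\eqref{eq:feasTraj}. For a nonlinear system this coincidence fails: Assumption~\ref{eq:dynamicsAssumption} only guarantees a feasibility-preserving input of the form $\sum_j \gamma_j u_{t+N-jP}$ with $\gamma_j\neq\lambda_j^\ast$ in general, so the input argument appearing in~\eqref{eq:feasibleTrajectoryCost} differs from the one truly applied along the feasible trajectory. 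This mismatch is the only obstacle, and resolving it is the whole content of the corollary.

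The resolution is precisely Assumption~\ref{eq:costAssumption}, and invoking it is the crux. Since for nonlinear dynamics the stage cost is independent of the input, both $h_{t+N}(\sum_j\lambda_j^\ast x_{t+N-jP},\sum_j\lambda_j^\ast u_{t+N-jP})$ and the terminal cost incurred along~\eqref{eq:feasTraj} reduce to the same quantity $h_{t+N}(\sum_j\lambda_j^\ast x_{t+N-jP})$, so the right-hand side of~\eqref{eq:feasibleTrajectoryCost} is again exactly the cost of the feasible trajectory~\eqref{eq:feasTraj}. I would also note that the convexity bound now requires only convexity in the state, which is contained in the joint convexity of Assumption~\ref{ass:stageCost}. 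Because~\eqref{eq:feasTraj} is a feasible but not necessarily optimal trajectory for the LMPC at time $t+1$, its cost upper-bounds $J_{t+1\to t+1+N}^{\rm LMPC}(x_{t+1})$; chaining this with~\eqref{eq:feasibleTrajectoryCost} yields $J_{t\to t+N}^{\rm LMPC}(x_t)\geq J_{t+1\to t+1+N}^{\rm LMPC}(x_{t+1})$, as claimed. I expect no genuine difficulty beyond correctly isolating this single input-dependent term and recognizing that Assumption~\ref{eq:costAssumption} neutralizes the one place where the linear proof relied on $\gamma_j=\lambda_j^\ast$.
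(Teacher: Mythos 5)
Your proposal matches the paper's proof of this corollary essentially verbatim: the paper likewise reuses every step of the linear-case argument and observes that the only point where linearity entered was in identifying the right-hand side of~\eqref{eq:feasibleTrajectoryCost} with the cost of the feasible trajectory~\eqref{eq:feasTraj} (since $\gamma_j\neq\lambda_j^\ast$ in general), which Assumption~\ref{eq:costAssumption} resolves by removing the input dependence of the terminal stage cost. The argument is correct and takes the same approach.
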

\medskip
\begin{proof}
All steps in the proof of Theorem~\ref{theo:nonincreasingCost} are still valid except for the conclusion that \eqref{eq:feasibleTrajectoryCost} is the cost of a feasible trajectory (because $\gamma_j\neq \lambda_j$ in \eqref{eq:feasTraj}). However recall that by Assumption~\ref{eq:costAssumption} we have that the stage cost does not depend on the input. Thus \eqref{eq:feasibleTrajectoryCost} becomes
\begin{equation}\label{eq:feasibleTrajectoryCostNL}
\begin{split}
J_{t\to t+N}^{LMPC}(x_t) &\geq
\!\!\!\!\sum_{k=t+1}^{t+N-1}\!\!\!\! h_k(x_{k|t}^\ast,u_{k|t}^\ast)\!+\!h_{t+N}\!\left(\sum_{j=1}^M \lambda_j^\ast  x_{t+N-jP}\right)\\
&
+Q_{t+1\to t+1+N}\left(\sum_{j=1}^M\lambda_j^\ast x_{t+N-jP+1}\right)
\end{split}
\end{equation}

where the right-hand side is the cost of the feasible trajectory~(\ref{eq:feasTraj}). This implies that
\begin{equation*}
\begin{split}
&J_{t\to t+N}^{LMPC}(x_t) \geq J_{t+1\to t+N+1}^{LMPC}(x_{t+1})
\end{split}
\end{equation*}
which proves that the cost is non-increasing.
\end{proof}

\begin{theorem}[Performance improvement, LTV systems]\label{theo:nonincreasingCost}
Consider system \eqref{eq:systemDynamics} in closed-loop with the LMPC~\eqref{eq:FTOCP} and \eqref{eq:LMPCpolicy}, where the convex safe set and \mbox{$Q$-function} are defined in \eqref{eq:convexSafeSet} and \eqref{eq:Qfunction}, respectively. Let system \eqref{eq:systemDynamics} be LTV and Assumptions~\ref{ass:stageCost}-\ref{eq:dynamicsAssumption} hold. Furthermore, assume that for $t>c$ the closed loop system converges to a \mbox{$P$-periodic} trajectory
\begin{equation*}
\bfx_c^\ast = [x_c^\ast,x_{c+1}^\ast,\dots,x_{c+P-1}^\ast],
\end{equation*}
and that the stage cost $h(\cdot,\cdot)$ is strictly convex.
Then the following closed-loop cost over a period $J_{t\to t+P}(x_t)$ is equal to the open-loop LMPC cost, i.e.,
\begin{equation*}
J_{t\to t+P}(x_t) = \sum_{k=t}^{t+P-1}h_k(x_k^\ast,u_{k}^\ast) = J_{t\to t+N}^{\rm LMPC}(x_{t})\quad \forall t>c.
\end{equation*}
\end{theorem}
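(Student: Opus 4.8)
The plan is to show that at convergence the open-loop value $V_t:=J^{\rm LMPC}_{t\to t+N}(x_t)$ becomes constant in $t$, then to use strict convexity to force the open-loop prediction onto the realized periodic orbit, and finally to evaluate the $Q$-function so that the stage costs and the return-cost telescope into a single full period. First I would establish that $V_t$ is $P$-periodic for $t>c$: at convergence the newest stored state $x_{t+N}$ coincides with an existing periodic vertex of the safe set, so ${\cal CS}_{t+P+N}={\cal CS}_{t+N}$, and by property~\eqref{eq:ctgProp} together with Assumption~\ref{ass:stageCost} the most-recent return-cost attached to each periodic vertex is itself $P$-periodic, so $Q_{t+P+N\to t+P}$ and $Q_{t+N\to t}$ agree on this common set. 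Shifting the time-$t$ optimizer forward by $P$ is then feasible at $t+P$ with identical cost, and shifting the time-$(t+P)$ optimizer back by $P$ is feasible at $t$ with identical cost, giving $V_{t+P}=V_t$. Combined with the monotone chain $V_t\ge V_{t+1}\ge\cdots\ge V_{t+P}=V_t$ from Theorem~\ref{theo:nonincreasingCostNL}, every inequality is forced tight, so $V_t=V_{t+1}$ for all sufficiently large $t>c$.

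The substantive step is to upgrade $V_t=V_{t+1}$ into coincidence of the open-loop prediction with the closed-loop trajectory. Here I would reuse the chain $V_t\ge W_t\ge V_{t+1}$ from the proof of Theorem~\ref{theo:nonincreasingCostNL}, where $W_t$ is the cost~\eqref{eq:feasibleTrajectoryCost} of the shifted-tail candidate~\eqref{eq:feasTraj}. Tightness now forces $W_t=V_{t+1}$, i.e.\ the shifted tail of the time-$t$ optimizer is optimal at $t+1$. For an LTV system with a strictly convex stage cost the FTOCP~\eqref{eq:FTOCP} has a unique minimizer (the input-to-state map is affine and the objective is therefore strictly convex in the inputs), so the time-$(t+1)$ optimizer equals that shifted tail; the optimal predictions are thus time-consistent, $x_{k|t+1}^\ast=x_{k|t}^\ast$ on the overlap $k=t+1,\dots,t+N$. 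Since the applied first step always lands on the orbit, $x_{t+1|t}^\ast=f_t(x_t,u_{t|t}^\ast)=x_{t+1}=x_{t+1}^\ast$, consistency propagates this across the whole horizon, giving $x_{k|t}^\ast=x_k^\ast$ and $u_{k|t}^\ast=u_k^\ast$ for $k=t,\dots,t+N$.

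It then remains to evaluate the terminal term and telescope. At convergence all safe-set points coincide with the periodic vertex, so the equality constraint in~\eqref{eq:Qfunction} is slack and the program minimizes a weighted sum of the return-costs $J_t(x_{t+N-jP})$, which by~\eqref{eq:ctgProp} equal $(j-1)\sum_{k=t}^{t+P-1}h_k(x_k^\ast,u_k^\ast)$ plus a constant and are thus strictly increasing in $j$; hence the optimizer places unit weight on $j=1$ and $Q_{t+N\to t}(x_{t+N}^\ast)=J_t(x_{t+N-P})=\sum_{k=t+N-P}^{t-1}h_k(x_k^\ast,u_k^\ast)$. Substituting this and the consistency result into the definition of $V_t$ gives
\begin{equation*}
V_t=\sum_{k=t}^{t+N-1}h_k(x_k^\ast,u_k^\ast)+\sum_{k=t+N-P}^{t-1}h_k(x_k^\ast,u_k^\ast)=\sum_{k=t+N-P}^{t+N-1}h_k(x_k^\ast,u_k^\ast),
\end{equation*}
a block of $P$ contiguous stage costs, which by $P$-periodicity equals $\sum_{k=t}^{t+P-1}h_k(x_k^\ast,u_k^\ast)=J_{t\to t+P}(x_t)$, as claimed.

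I expect the main obstacle to be the second paragraph: converting the flat value $V_t=V_{t+1}$ into genuine coincidence of the open-loop and closed-loop trajectories. This is precisely where strict convexity is indispensable, since without uniqueness of the minimizer the controller could, between re-solves, follow an off-orbit path that is strictly cheaper over the $N$-step horizon while still applying on-orbit first moves, in which case $V_t$ could drop strictly below the per-period closed-loop cost. Some care is also needed in the first paragraph to argue that the growing safe set and the time-varying $Q$-function genuinely stabilize, so that $V_{t+P}=V_t$ holds exactly rather than merely asymptotically.
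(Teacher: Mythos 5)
Your overall architecture is close to the paper's (the monotonicity chain from Theorem~\ref{theo:nonincreasingCostNL}, a period-$P$ shifting argument, and strict convexity to pin down the minimizer), and your second paragraph is a clean reorganization that works: forcing $W_t=V_{t+1}$ from tightness of $V_t\ge W_t\ge V_{t+1}$, invoking uniqueness of the FTOCP minimizer to get time-consistency of the predictions, and propagating the on-orbit first step across the horizon recovers both the terminal-point-on-orbit fact and the open-loop/closed-loop coincidence in one stroke, whereas the paper proves the terminal-point fact separately and then runs an endpoint-matching argument. Your third paragraph also makes explicit the telescoping that the paper leaves implicit. The problem is that the step feeding all of this, $V_{t+P}=V_t$, rests on a false claim about the $Q$-function.

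Concretely: the safe set ${\cal SS}_{t+N}=\{x_{t+N-jP}\}_{j=1}^{M}$ permanently retains every pre-convergence state, so it is not true that ``all safe-set points coincide with the periodic vertex,'' and it is not true that $Q_{t+P+N\to t+P}$ and $Q_{t+N\to t}$ agree on the common convex hull. By \eqref{eq:ctgProp}, after one period the return-cost of \emph{every} retained vertex grows by the per-period cost $J_c^\ast=\sum_{k=t}^{t+P-1}h_k(x_k^\ast,u_k^\ast)$; only the newly re-added vertex $x_{t+N}$ enters with the old cost $J_t(x_{t+N-P})$. Hence $Q_{k+P\to t+P}(x)=Q_{k\to t}(x)+\Lambda(x)$ with $\Lambda(x)\in[0,J_c^\ast]$ and $\Lambda(x)=0$ only at the re-added periodic vertex --- exactly the quantity the paper introduces. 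Your forward shift therefore does not have ``identical cost'' (it costs $V_t+\Lambda(x^\ast_{t+N|t})$), so the two-way argument for $V_{t+P}=V_t$ fails as stated; what does work is the one-sided version, namely that the backward shift is feasible at $t$ with cost $V_{t+P}-\Lambda(z)$, giving the sandwich $V_t\le V_{t+P}-\Lambda(z)\le V_{t+P}\le V_t$, which forces both $V_{t+P}=V_t$ and $\Lambda(z)=0$. The same oversight affects your evaluation of $Q_{t+N\to t}(x^\ast_{t+N})$: you cannot conclude that the minimizer in \eqref{eq:Qfunction} puts unit weight on $j=1$ by comparing only the post-convergence vertices, since convex combinations of pre-convergence vertices may also represent $x^\ast_{t+N}$; you need $\Lambda(x^\ast_{t+N|t})=0$ (or an equivalent argument) to get $Q_{t+N\to t}(x^\ast_{t+N})=J_t(x_{t+N-P})$, and that identity additionally requires $t+N-P\ge c$, i.e., a full period of post-convergence data in the safe set, so the chain of equalities holds only for $t$ sufficiently larger than $c$ --- a caveat you flag but which deserves to be stated in the conclusion as well.
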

\medskip
\begin{proof}
The proof will consist of two parts. The first will prove that, if the trajectory is periodic, the LMPC terminal point is on the closed-loop trajectory. The second part will show that, assuming strictly convex stage cost, the open-loop and closed-loop trajectories are identical.

Assume in the following $t\ge c$. 
Since the trajectory is \mbox{$P$-periodic} we have that the closed-loop cost over a period is constant, i.e.,
\begin{equation*}
J_c^\ast = \sum_{k=t}^{t+P-1}h_k(x_k^\ast,u_{k}^\ast).
\end{equation*}
From the definition of the return-cost \eqref{eq:ctg} we have
\begin{equation}\label{eq:recursiveCtg}
J_{t+P}(x_c) = J_t(x_c) + J_c^\ast.
\end{equation}
It is easy to see from \eqref{eq:ctg} that the return-cost of every state in the safe set increases by $J_c^\ast$ after one period. Furthermore, the state $x_{c+N}$ is re-added to the safe set at each cycle, so its cost stays constant. Thus we can characterize the variation of the \mbox{$Q$-function} between two consecutive cycles as
\begin{equation*}
Q_{k+P\to t}(x) = Q_{k\to t}(x) + \Lambda(x) \quad \forall k,t\ge c
\end{equation*}
with $0\ge\Lambda(x)\ge J_c^\ast$. Also note that $\Lambda(x) = 0$ if and only if $x = x_c^\ast$, i.e., the terminal state is on the trajectory $\bfx_c^\ast$.
Consider the cost of LMPC at time $t+P$
\begin{equation*}\begin{split}
&J_{t+P\to t+P+N}^{LMPC}(x_c) =\\ &\sum_{k=t+P}^{t+P+N-1}h_k(x_{k|t+P}^\ast,u_{k|t+P}^\ast) + Q_{t+P+N\to t}(x_{t+P+N|t}^\ast) \\
&=\sum_{k=t+P}^{t+P+N-1}h_k(x_{k|t+P}^\ast,u_{k|t+P}^\ast) + Q_{t+N\to t}(x_{t+P+N|t}^\ast)\\
&\quad\quad\quad\quad\quad\quad\quad\quad\quad\quad\quad\quad\quad\quad\quad\quad\quad\quad+\Lambda(x_{t+N|t}^\ast)
\end{split}\end{equation*}
$x_{t+P+N|t}^\ast$ is not necessarily the optimal terminal state for LMPC at time $t$, but it is feasible because the convex safe set does not change after convergence. Thus we can lower bound the above expression with $J_{t\to t+P}^{LMPC}(x_c)+\Lambda(x)$ and upper bound it using Theorem~\ref{theo:nonincreasingCost}
\begin{equation*}\begin{split}
&J_{t\to t+N}^{LMPC}(x_{t+N|t}^\ast) \ge J_{t+P\to t+P+N}^{LMPC}(x_{t+P+N|t}^\ast) \ge \\
&J_{t\to t+N}^{LMPC}(x_{t+N|t}^\ast) + \Lambda(x_{t+N|t}^\ast)
\end{split}\end{equation*}
which implies $\Lambda(x_{t+N|t}^\ast)=0$ and thus $x_{t+N|t}^\ast=x_{c+N}$. This completes the first part of the proof.

In the second part we will prove that the open-loop trajectory is the same as the closed loop trajectory $\forall t\ge c$. We proceed by contradiction by assuming that the open-loop trajectory is not the same as the closed-loop. Therefore, for some $t$, the predicted trajectory does not overlap with the predicted trajectory at time $t+1$. With no loss of generality we will assume this happens at time $c$.

From convergence it follows that the first two elements in the open-loop trajectory are on the closed-loop trajectory (one is the current state, the other is the realized state). We also proved that the terminal state is on the closed-loop trajectory
\begin{equation*}
\bfx_{c|c}^\ast = [x_{c|c}^\ast,x_{c+1|c}^\ast,\dots,x_{c+N|c}^\ast],
\end{equation*}
in which $x_{c|c}^\ast=x_c$, $x_{c+1|c}^\ast=x_{c+1}$ and $x_{c+N|t}^\ast=x_{c+N}$, and the predicted trajectory at time $c+1$
\begin{equation}\label{eq:trajectoryXC1}
\bfx_{c+1|c+1}^\ast = [x_{c+1|c+1}^\ast,x_{c+2|c+1}^\ast,\dots,x_{c+N+1|c+1}^\ast],
\end{equation}
in which $x_{c+1|c+1}^\ast=x_{c+1}$, $x_{c+2|c+1}^\ast=x_{c+2}$, and $x_{c+N+1|c+1}^\ast=x_{c+N+1}$.

Construct a new candidate trajectory by appending $x_{c+N+1}$ at the end of $\bfx_{c|c}^\ast$ and removing its first term,
\begin{equation}\label{eq:trajectoryXTT}
\bar \bfx_{c+1} = [x_{c+1|c}^\ast,\dots,x_{c+N|c}^\ast,x_{c+N+1}^\ast].
\end{equation}
By the same argument used in the proof for Theorem~\ref{theo:nonincreasingCost}, the cost of trajectory~\eqref{eq:trajectoryXC1} is greater or equal of the cost of trajectory~\eqref{eq:trajectoryXTT}, but since we have reached convergence the cost is in fact equal. Furthermore, these two trajectory also have the same initial and terminal states.
Problem~\eqref{eq:FTOCP} is strictly convex because, by assumption, $h(\cdot, \cdot)$ is strictly convex, the dynamics is LTV and the constraint set is convex. It follows then that the trajectory~\eqref{eq:trajectoryXTT} equals the trajectory~\eqref{eq:trajectoryXC1}. Therefore we have that
\begin{equation*}
    x_{c+2} = x^*_{c+2|c+1} = x^*_{c+2|c}.
\end{equation*}
By iterating the above procedure we have that 
\begin{equation*}
    x_{c'+k} = x^*_{c'+k|c'+1} = x^*_{c'+k|c'}, \forall k \in \{0,\dots, N\}, \forall c' \geq c.
\end{equation*}
Thus the closed-loop trajectory and the open-loop trajectory are the same.
\end{proof}

\section{Simulations}\label{sec:simulations}

We test the proposed LMPC on $4$ systems. 
In each simulation the period is $P=100$. The first three are performed on linear systems and each one focuses on a different element of the control problem enforcing the periodic behavior. In particular, we test the controller on examples with time-varying dynamics, time-varying constraints and time-varying stage cost. We consider them separately for illustrative purposes, but this is not required. Finally, we test the proposed strategy on a time-varying nonlinear system.

For each simulation, we show the state and input time evolution for ten periods.
Furthermore, we report the LMPC cost which is decreasing until the closed-loop system converges to a periodic trajectory.
Note that the first cycle shown in each figure is the initial known trajectory.

\subsection{Linear system with time-varying dynamics}

The first simulation is performed on the following linear time-varying system $x_{t+1} = A_t x_t + Bu_t$ where state $x=(p,q)^T$,

\begin{equation*}
\begin{split}
&A_t = \left(\begin{array}{cc}
1 & 0.1 \\
0.1(1-\sin (2\pi t/P)) & 1
\end{array}\right) \text{ and }
B = \left(\begin{array}{c}
0 \\
0.1
\end{array}\right).
\end{split}
\end{equation*}
The control objective is  to reach a set-point on the first component of the state $-0.2$, while minimizing the input, i.e.
$h(x_t,u_t) = (p_t-0.2)^2 + u_t^2 \quad \forall t\ge 0$.
Furthermore, the state constraint is $|p_t|\leq 0.3~\forall t\ge 0$. The initial feasible trajectory is a zero input steady-state at the origin and the prediction horizon is $N=25$.

\begin{figure}[h!]
\centering
\includegraphics[width=\columnwidth]{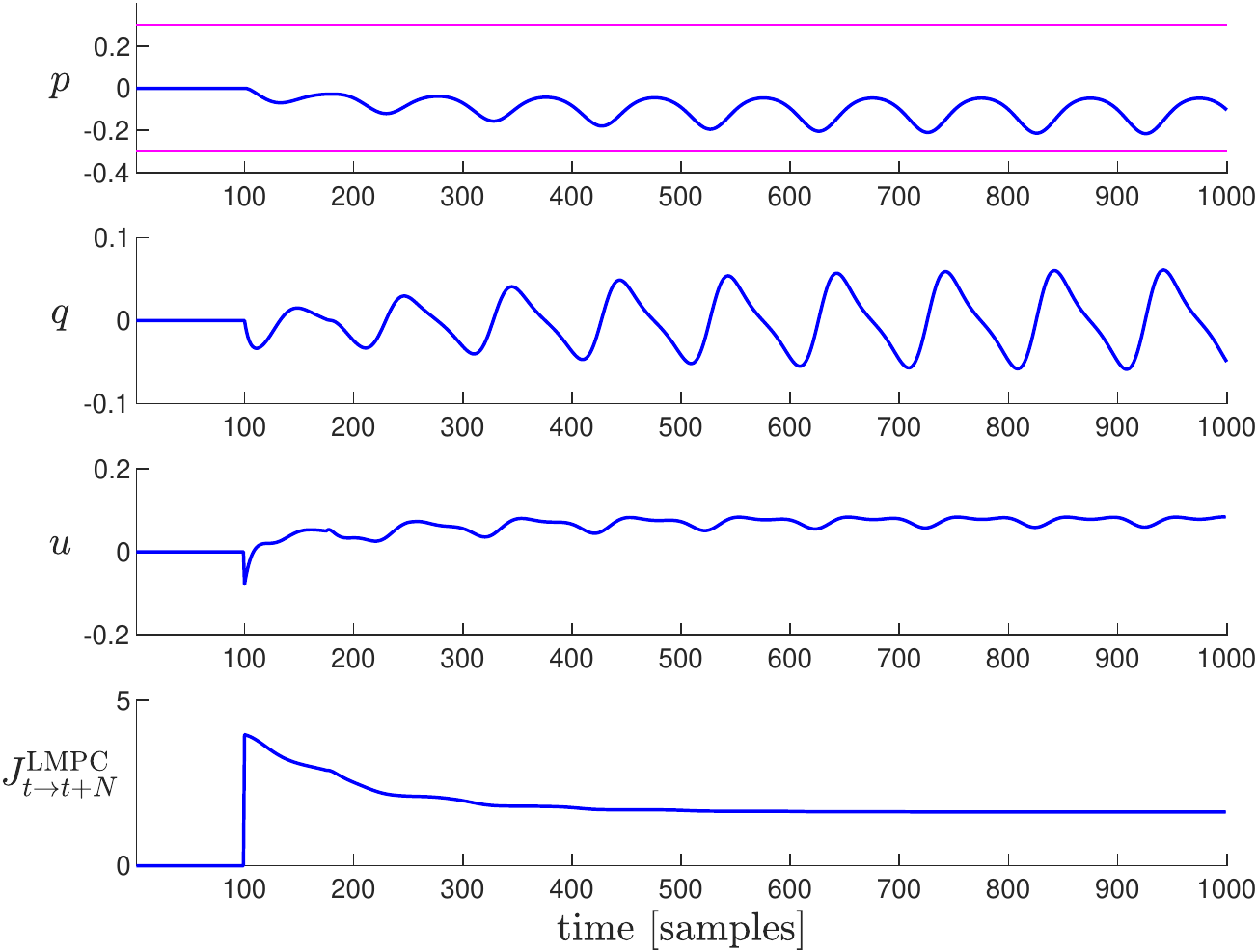}
\caption{Linear time-varying system controlled with LMPC.}
\label{fig:linearTVdynamics}
\end{figure}

The dynamics can be interpreted as a spring-mass system, where the spring stiffness is varying periodically. As such, we expect the first component of the state to settle somewhere in between the origin and the set-point, with the exact position depending on the value of the variable stiffness at that particular time. Figure~\ref{fig:linearTVdynamics} shows the trajectory over a time frame of ten cycles, and we confirm that the controller recursively satisfies the state constraints. Furthermore, we notice that the cost of LMPC is decreasing and after 3-4 cycles the system reaches a periodic trajectory which behaves as expected.

\subsection{Linear system with time-varying constraints}

The second simulation is performed on a double integrator subject to time-varying constraints. The state is $x=(p,q)^T$ and the dynamics are $x_{t+1} = Ax_t + Bu_t $ with
\begin{equation}
\begin{split}\label{eq:doubleIntegrator}
&A = \left(\begin{array}{cc}
1 & 0.1 \\
0 & 1
\end{array}\right) \quad
B = \left(\begin{array}{c}
0 \\
0.1
\end{array}\right).
\end{split}
\end{equation}
The prediction horizon is $N=30$ and the and the objective is the minimization of the input, i.e.
$h_t(x_t,u_t) = u_t^2 \quad \forall t$.
The time-varying constraints follow a cyclical a pattern that repeats every $P$.
\begin{equation*}
\begin{cases}
-0.4\le p\le 0.1 &\quad\text{if }t \bmod P < P/6\\
-0.4\le p\le -0.2 &\quad\text{if }P/6\ge t\bmod P< 2P/6 \\ 
-0.4\le p\le 0.1 &\quad\text{if }2P/6\ge t\bmod P< 3P/6 \\
-0.1\le p\le 0.4 &\quad\text{if }3P/6\ge t\bmod P< 4P/6 \\ 
0.2\le p\le 0.4 &\quad\text{if }4P/6\ge t\bmod P< 5P/6 \\ 
-0.1\le p\le 0.4 &\quad\text{if }t\bmod P\ge 5P/6 \\ 
\end{cases}
\end{equation*}

There is no steady-state solution which satisfies the constraints in this example. The initial trajectory was computed using an MPC which minimizes the input over the prediction horizon but has no terminal constraint or terminal cost. After a few iteration it converges to a periodic suboptimal trajectory which is then used to initialize LMPC.
\begin{figure}[h!]
\centering
\includegraphics[width=\columnwidth]{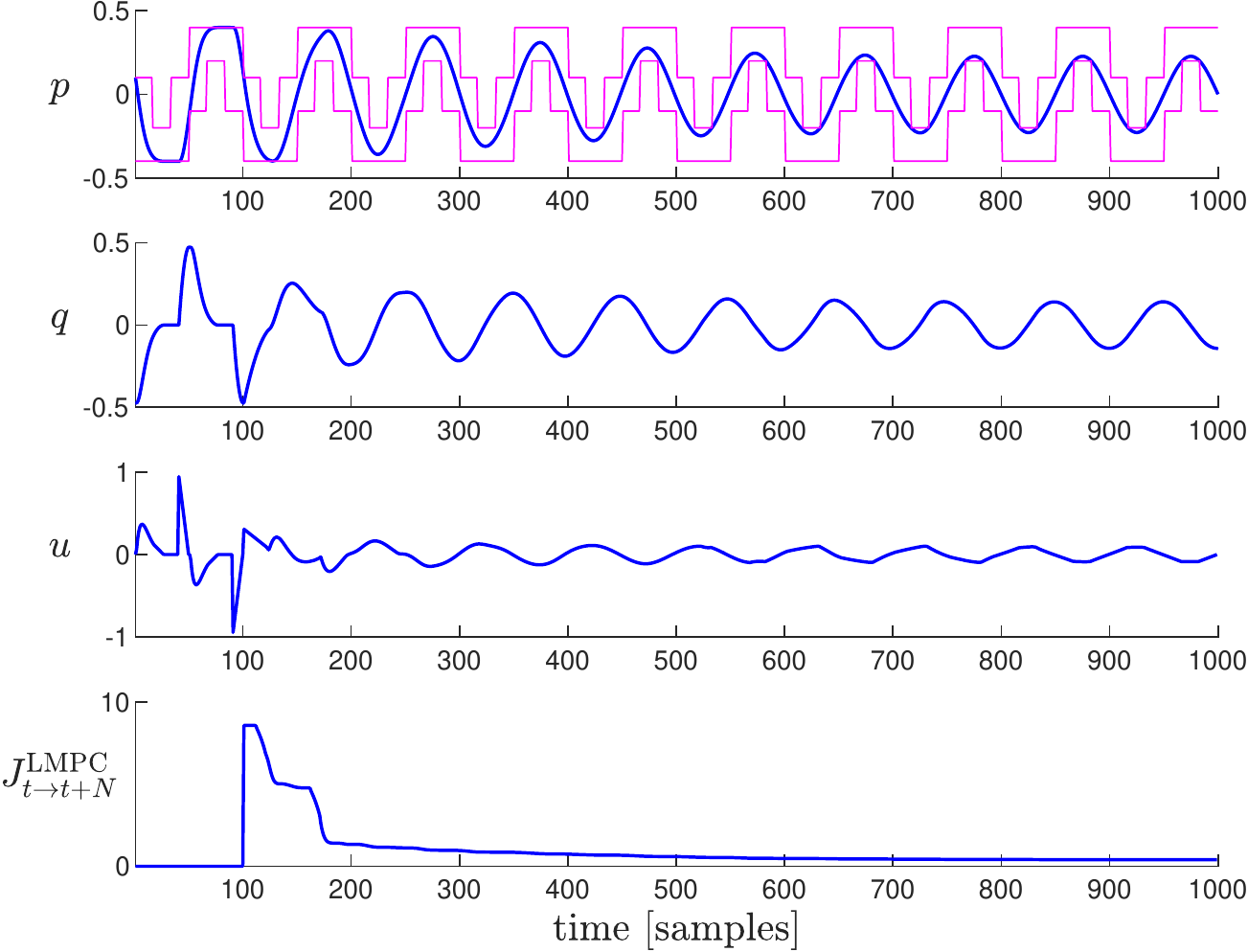}
\caption{Double integrator with time-varying constraints controlled with LMPC.}
\label{fig:TVconstraints}
\end{figure}

The closed-loop trajectory is shown in Figure~\ref{fig:TVconstraints}, where we reported also the time-varying state constraints. Also in this example the state constraint are recursively satisfied.
We notice that the controller is initially following the provided trajectory, and over time attempts to restrict the amplitude of the oscillation. If allowed by the constraints it would regulate the state to the origin, but since this is not possible, it settles on a smooth periodic trajectory touching the inner constraints.

\subsection{Linear system with time-varying stage cost}

In this simulation we use the same double integrator (\ref{eq:doubleIntegrator}), with a velocity constraint $|q|\leq 0.1$, and prediction horizon $N=15$. This time we have a time-varying stage cost, which consists of a set-point on the first component of the state and input minimization. The set-point is switching between $\pm 0.2$ every half period.
\begin{equation}
h_t(x_t,u_t) = 
     \begin{cases}
       (p_t + 0.2)^2 + u_t^2 &\quad\text{if }t \bmod P < P/2\\
       (p_t - 0.2)^2 + u_t^2 &\quad\text{if }t \bmod P \geq P/2 \\ 
     \end{cases}.
\end{equation}
The initial trajectory is a zero input steady-state at the origin. Figure~\ref{fig:TVcost} shows the resulting behavior. The initial cycles attempt to get closer to the set-point but the oscillation is limited by the necessity to satisfy the terminal constraint. As more points are added to the safe set, the system is able to oscillate until reaching the set point. The velocity constraints limit how fast the set point can be reached and the system settles on a periodic trajectory within 4-5 cycles.
\begin{figure}[h!]
\centering
\includegraphics[width=\columnwidth]{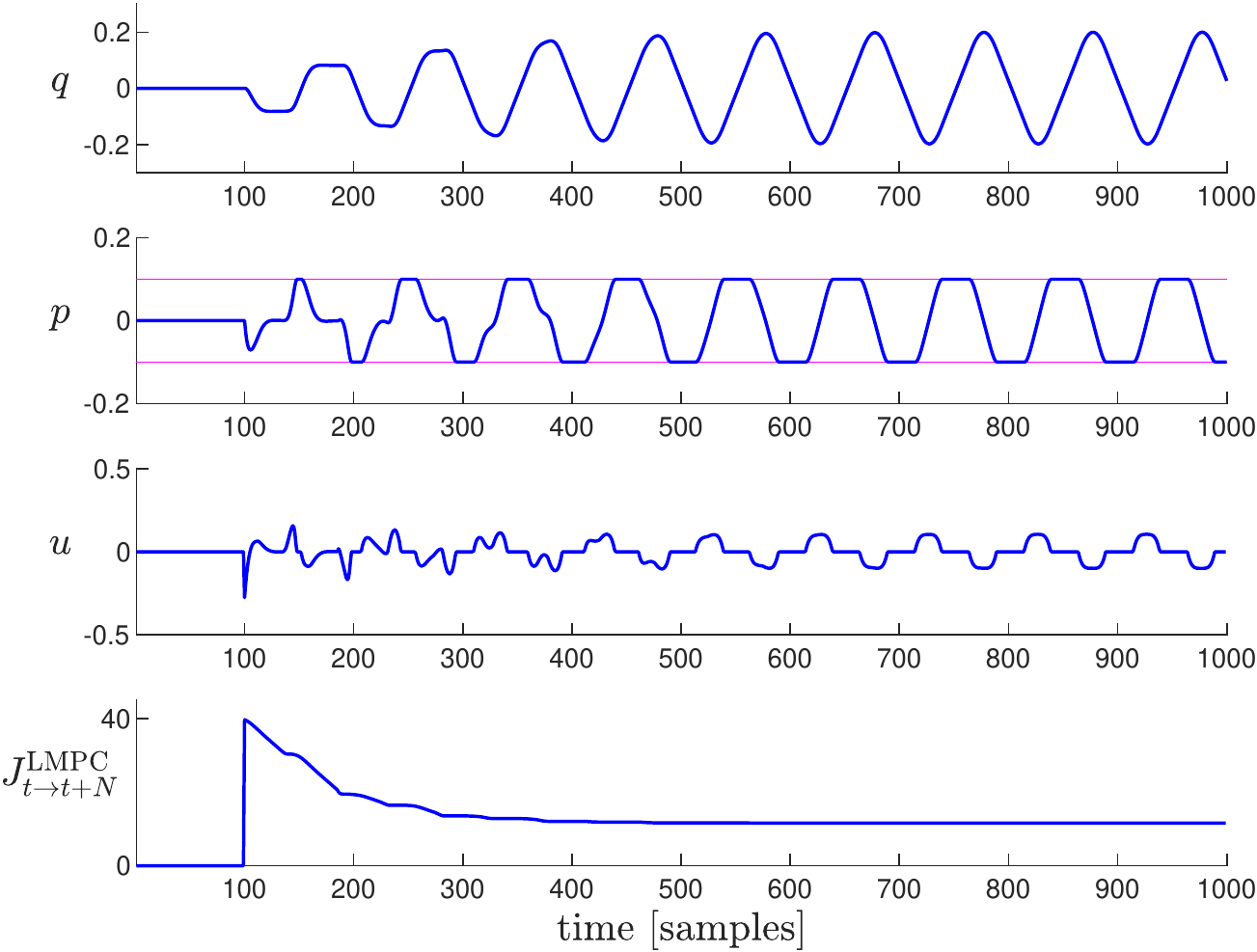}
\caption{Double integrator with time-varying stage cost controlled with LMPC.}
\label{fig:TVcost}
\end{figure}

\subsection{Non-linear time-varying system}

The last simulation is performed on a non-linear system with a time-varying component. The state is $x=(p,q)^T$ and the dynamics are
\begin{equation}
\begin{split}\label{eq:NLsystem}
\left(\begin{array}{c}
p_{t+1} \\
q_{t+1}
\end{array}\right) 
= 
\left(\begin{array}{c}
p_t \\
q_t
\end{array}\right)
+0.1
\left(\begin{array}{c}
q_t \\
p_t (5\sin (2\pi t/P) + u_t)
\end{array}\right),
\end{split}
\end{equation}
with $P=100$, and the control objective is to minimize the following cost $h_t(x_t,u_t) = (p_t-2)^2 \quad \forall t\ge 0$
subject to constraints
$p_t\ge 0.5 \quad |u_t|\le 5 \quad \forall t\ge 0$.

This system satisfies Assumption \ref{eq:dynamicsAssumption}. To verify this consider a collection of states $\{x_j=(p_j,q_j)^T\}$ for $j=0,\dots,L$. The set of multipliers $\{\gamma_j \}$ that verifies the assumption is 
\begin{equation*}
\gamma_j = \frac{\lambda_j p_j}{\sum_{i=0}^L\lambda_i p_i},
\end{equation*}
which if $p_j>0$ gives $0\ge \gamma_j \ge 1$ and $\sum_j\gamma_j = 1$. Also notice that since the stage cost is not a function of the input, this satisfies Assumption \ref{eq:costAssumption}.

The initial trajectory is simply computed by setting $x_t=x_{t+1}$ in (\ref{eq:NLsystem}), which gives $u_t = -5\sin (2\pi t/P)$. It is easy to verify that by picking $x_0=(1,0)^T$ and iterating the dynamics for $0\le t \le P-1$, the constraints remain verified, so the trajectory is feasible.

Figure~\ref{fig:NL} shows a simulation with $N=8$, performed in MATLAB using IPOPT. As for the other simulations, the first $100$ samples show the initial feasible trajectory, which is why they have no associated LMPC cost. From $t=P$ on, the cost is always decreasing, and the system converges to an optimal trajectory. The trajectory actually reaches the set-point $p=2$ because this is the only control objective, but it is unable to stay there when the time-varying term becomes predominant, and thus settles on a periodic trajectory.
\begin{figure}[h!]
\centering
\includegraphics[width=\columnwidth]{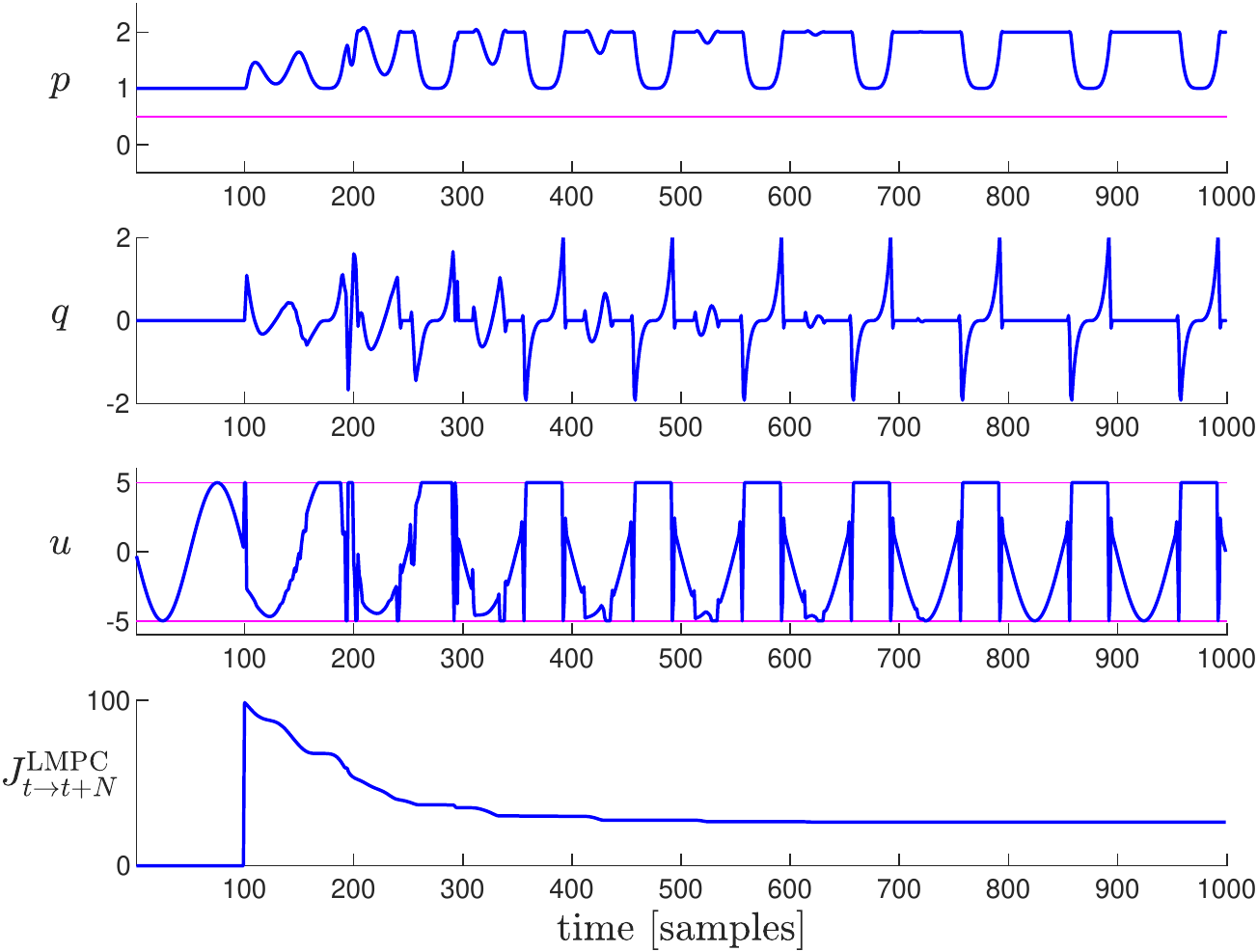}
\caption{Nonlinear time-varying system controlled with LMPC.}
\label{fig:NL}
\end{figure}

\section{Conclusions}

We presented an LMPC for periodic repetitive tasks, considering a wide range of systems defined by dynamics, constraints and stage cost which are periodically time-varying. The controller is aimed at continuous operation and uses historical data to construct a time-varying terminal set and associate to each state a terminal cost. We proved the proposed strategy guarantee recursive constraint satisfaction, and non-increasing MPC cost. Simulations on linear and nonlinear systems prove the effectiveness of the technique.

\renewcommand{\baselinestretch}{0.95}
\bibliographystyle{IEEEtran}
\bibliography{biblio}

\begin{thebibliography}{10}
\providecommand{\url}[1]{#1}
\csname url@samestyle\endcsname
\providecommand{\newblock}{\relax}
\providecommand{\bibinfo}[2]{#2}
\providecommand{\BIBentrySTDinterwordspacing}{\spaceskip=0pt\relax}
\providecommand{\BIBentryALTinterwordstretchfactor}{4}
\providecommand{\BIBentryALTinterwordspacing}{\spaceskip=\fontdimen2\font plus
\BIBentryALTinterwordstretchfactor\fontdimen3\font minus
  \fontdimen4\font\relax}
\providecommand{\BIBforeignlanguage}[2]{{%
\expandafter\ifx\csname l@#1\endcsname\relax
\typeout{** WARNING: IEEEtran.bst: No hyphenation pattern has been}%
\typeout{** loaded for the language `#1'. Using the pattern for}%
\typeout{** the default language instead.}%
\else
\language=\csname l@#1\endcsname
\fi
#2}}
\providecommand{\BIBdecl}{\relax}
\BIBdecl

\bibitem{bristow2006survey}
D.~A. Bristow, M.~Tharayil, and A.~G. Alleyne, ``A survey of iterative learning
  control,'' \emph{IEEE control systems magazine}, vol.~26, no.~3, pp. 96--114,
  2006.

\bibitem{c7}
J.~H. Lee and K.~S. Lee, ``Iterative learning control applied to batch
  processes: An overview,'' \emph{Control Engineering Practice}, vol.~15,
  no.~10, pp. 1306--1318, 2007.

\bibitem{c8}
Y.~Wang, F.~Gao, and F.~J. Doyle, ``Survey on iterative learning control,
  repetitive control, and run-to-run control,'' \emph{Journal of Process
  Control}, vol.~19, no.~10, pp. 1589--1600, 2009.

\bibitem{hillerstrom1996repetitive}
G.~Hillerstr{\"o}m and K.~Walgama, ``Repetitive control theory and
  applications-a survey,'' \emph{IFAC Proceedings Volumes}, vol.~29, no.~1, pp.
  1446--1451, 1996.

\bibitem{lee2001model}
J.~H. Lee, S.~Natarajan, and K.~S. Lee, ``A model-based predictive control
  approach to repetitive control of continuous processes with periodic
  operations,'' \emph{Journal of Process Control}, vol.~11, no.~2, pp.
  195--207, 2001.

\bibitem{gupta2006period}
M.~Gupta and J.~H. Lee, ``Period-robust repetitive model predictive control,''
  \emph{Journal of Process Control}, vol.~16, no.~6, pp. 545--555, 2006.

\bibitem{gondhalekar2011mpc}
R.~Gondhalekar and C.~N. Jones, ``Mpc of constrained discrete-time linear
  periodic systems - a framework for asynchronous control: Strong feasibility,
  stability and optimality via periodic invariance,'' \emph{Automatica},
  vol.~47, no.~2, pp. 326--333, 2011.

\bibitem{cao2008repetitive}
R.~Cao and K.-S. Low, ``A repetitive model predictive control approach for
  precision tracking of a linear motion system,'' \emph{IEEE Transactions on
  Industrial Electronics}, vol.~56, no.~6, pp. 1955--1962, 2008.

\bibitem{balaji2007repetitive}
S.~Balaji, A.~Fuxman, S.~Lakshminarayanan, J.~Forbes, and R.~Hayes,
  ``Repetitive model predictive control of a reverse flow reactor,''
  \emph{Chemical Engineering Science}, vol.~62, no.~8, pp. 2154--2167, 2007.

\bibitem{friis2011repetitive}
J.~Friis, E.~Nielsen, J.~Bonding, F.~D. Adegas, J.~Stoustrup, and P.~F.
  Odgaard, ``Repetitive model predictive approach to individual pitch control
  of wind turbines,'' in \emph{2011 50th IEEE Conference on Decision and
  Control and European Control Conference}.\hskip 1em plus 0.5em minus
  0.4em\relax IEEE, 2011, pp. 3664--3670.

\bibitem{gilbert1977optimal}
E.~G. Gilbert, ``Optimal periodic control: A general theory of necessary
  conditions,'' \emph{SIAM Journal on Control and Optimization}, vol.~15,
  no.~5, pp. 717--746, 1977.

\bibitem{bittanti1986optimal}
S.~Bittanti and G.~Guardabassi, ``Optimal periodic control and periodic systems
  analysis: An overview,'' in \emph{1986 25th IEEE Conference on Decision and
  Control}.\hskip 1em plus 0.5em minus 0.4em\relax IEEE, 1986, pp. 1417--1423.

\bibitem{colonius2006optimal}
F.~Colonius, \emph{Optimal periodic control}.\hskip 1em plus 0.5em minus
  0.4em\relax Springer, 2006, vol. 1313.

\bibitem{ellis2014tutorial}
M.~Ellis, H.~Durand, and P.~D. Christofides, ``A tutorial review of economic
  model predictive control methods,'' \emph{Journal of Process Control},
  vol.~24, no.~8, pp. 1156--1178, 2014.

\bibitem{diehl2010lyapunov}
M.~Diehl, R.~Amrit, and J.~B. Rawlings, ``A lyapunov function for economic
  optimizing model predictive control,'' \emph{IEEE Transactions on Automatic
  Control}, vol.~56, no.~3, pp. 703--707, 2010.

\bibitem{zanon2016periodic}
M.~Zanon, L.~Gr{\"u}ne, and M.~Diehl, ``Periodic optimal control, dissipativity
  and mpc,'' \emph{IEEE Transactions on Automatic Control}, vol.~62, no.~6, pp.
  2943--2949, 2016.

\bibitem{rosolia2017learning}
U.~Rosolia and F.~Borrelli, ``Learning model predictive control for iterative
  tasks. a data-driven control framework,'' \emph{IEEE Transactions on
  Automatic Control}, vol.~63, no.~7, pp. 1883--1896, 2017.

\bibitem{rosolia2017linear}
------, ``Learning model predictive control for iterative tasks: A
  computationally efficient approach for linear system,''
  \emph{IFAC-PapersOnLine}, vol.~50, no.~1, pp. 3142--3147, 2017.

\end{thebibliography}

\end{document}